\newcommand{\eat}[1]{}
\newtheorem{definition}{Definition}
\newtheorem{problem}{Problem}
\newtheorem{assumption}{Assumption}
\DeclareMathOperator*{\argmin}{arg\,min} % thin space, limits underneath in displays
\newtheorem{theorem}{Theorem}
\newcommand*{\indep}{%
  \mathbin{%
    \mathpalette{\@indep}{}%
  }%
}
\newcommand*{\nindep}{%
  \mathbin{%                   % The final symbol is a binary math operator
    \mathpalette{\@indep}{\not}% \mathpalette helps for the adaptation
                              % of the symbol to the different math styles.
  }%
}
\newcommand*{\@indep}[2]{%
  % #1: math style
  % #2: empty or \not
  \sbox0{$#1\perp\m@th$}%        box 0 contains \perp symbol
  \sbox2{$#1=$}%                 box 2 for the height of =
  \sbox4{$#1\vcenter{}$}%        box 4 for the height of the math axis
  \rlap{\copy0}%                 first \perp
  \dimen@=\dimexpr\ht2-\ht4-.2pt\relax
      % The equals symbol is centered around the math axis.
      % The following equations are used to calculate the
      % right shift of the second \perp:
      % [1] ht(equals) - ht(math_axis) = line_width + 0.5 gap
      % [2] right_shift(second_perp) = line_width + gap
      % The line width is approximated by the default line width of 0.4pt
  \kern\dimen@
  {#2}%
      % {\not} in case of \nindep;
      % the braces convert the relational symbol \not to an ordinary
      % math object without additional horizontal spacing.
  \kern\dimen@
  \copy0 %                       second \perp
}
\title{Sufficient Dimension Reduction for Average Causal Effect Estimation}
\author {Debo Cheng\thanks{chedy055@mymail.unisa.edu.com}, Jiuyong Li\thanks{Jiuyong.Li@unisa.edu.au}, Lin Liu, Jixue Liu\\}
\begin{document}
%\linenumbers
\maketitle

\begin{abstract}
Having a large number of covariates can have a negative impact on the quality of causal effect estimation since confounding adjustment becomes unreliable when the number of covariates is large relative to the samples available. Propensity score is a common way to deal with a large covariate set, but the accuracy of propensity score estimation (normally done by logistic regression) is also challenged by large number of covariates. In this paper, we prove that a large covariate set can be reduced to a lower dimensional representation which captures the complete information for adjustment in causal effect estimation. The theoretical result enables effective data-driven algorithms for causal effect estimation. We develop an algorithm which employs a supervised kernel dimension reduction method to search for a lower dimensional representation for the original covariates, and then utilizes nearest neighbor matching in the reduced covariate space to impute the counterfactual outcomes to avoid large-sized covariate set problem. The proposed algorithm is evaluated on two semi-synthetic and three real-world datasets and the results have demonstrated the effectiveness of the algorithm.
\end{abstract}

\section{Introduction}
\label{Sec:Intro}
Estimating the causal effect of an action (also known as exposure, intervention or treatment in literature) on an outcome is a central problem in scientific discovery and it is the essential requirement for decision making in many areas such as medical treatments, government policy marking, and marketing, to name a few.

A key to accurate causal effect estimation is confounding control as uncontrolled confounding variables can introduce spurious association between the treatment and the outcome, biasing the estimation of causal effect. Properly designed and executed randomization in a randomized control trial (RCT) is the gold standard for confounding control~\cite{deaton2018understanding}. However, it is often impossible or too costly to conduct RCTs. As a result, it is desirable to estimate causal effects from observational data, and data-driven causal effect estimation has attracted much attention in recent years~\cite{imbens2015causal,haggstrom2018data}.

%Confounding bias poses a challenge of estimating the causal effect of a treatment variable $W$ on its response variable $Y$ of interest from observational data, caused by confounders that are potentially causing the spurious association between $W$ and $Y$.

Controlling or adjusting for a \emph{deconfounding set} (also known as adjustment set in literature) is an effective way to eliminate confounding bias in causal effect estimation using observational data~\cite{pearl2009causality,sekhon2011multivariate,vanderweele2011new,shpitser2012validity}.
The size of the deconfounding set can significantly affect the performance of a causal effect estimator~\cite{abadie2006large,benkeser2017doubly}, and a small sized deconfounding set is preferred~\cite{de2011covariate,witte2019covariate}.

There exist two general approaches for determining a proper deconfounding set, each with own limitations: \textbf{1.} Including all covariates in the deconfounding set. This is a straightforward approach, but a large sized deconfounding set leads to the reduction of statistical gain problem~\cite{de2011covariate}; \textbf{2.} Selecting a subset of covariate variables to form the deconfounding set, based on some criterion, mostly, the back-door criterion or its variations~\cite{pearl2009causality,maathuis2015generalized}. However, the underlying causal graphs required by these criteria are usually unknown and it is impossible to recover a unique causal graph from the data alone.

Another line of research is focused on dimension reduction techniques to create a small set of variables in a different space for confounding adjustment. An early and notable example of this type of techniques is propensity score~\cite{rubin1974estimating,rosenbaum1983central}, which reduces a covariate set to a scalar, specifically, the probability of an individual receiving the treatment given the covariates. However, propensity score estimation again suffers from large-sized covariate sets~\cite{hahn1998role,van2014entering,luo2017estimating}. More recently, some advancement has been made along the direction of dimension reduction for causal effect estimation (details in the Related work section). However, it is not clear whether or not dimension reduction guarantees unbiased causal effect estimation.

In this paper, we prove that the deconfounding set obtained under Sufficient Dimension Reduction (SDR) is sufficient to control confounding bias, based on the graphical causal model. This result opens a door for developing new methods for causal effect estimation with a large number of covariates. We then propose a method CESD, the \underline{C}ausal \underline{E}ffect estimator by using \underline{S}ufficient \underline{D}imension reduction. This method utilizes kernel dimension reduction~\cite{fukumizu2004dimensionality} which satisfies the SDR conditions to learn a deconfounding set from data and captures the conditional independence on covariance operators using the \emph{reproducing kernel Hilbert spaces}~\cite{aronszajn1950theory,hofmann2008kernel}. The main technical contributions of the work can be summarized as follows.

\begin{itemize}
  \item We have developed a theorem to show that the deconfounding set obtained under SDR is sufficient for controlling confounding bias in causal effect estimation based on graphical causal modeling. To the best of our knowledge, this is the first work which proves that the reduced covariate set by SDR is a proper deconfounding set.
  \item With the support of the theorem, we develop a data-driven algorithm, CESD, which finds a deconfounding set satisfying the conditional independence in RKHS without previous assumptions and utilizes nearest neighbour matching with the deconfounding set for average causal effect estimation.
  \item The experimental results on two semi-synthetic and three real-world datasets have demonstrated the effectiveness of CESD in causal effect estimation, compared with state-of-the-art methods. The experiments also investigate and demonstrate the superiority of the deconfounding set found by CESD over propensity score.
\end{itemize}

\section{Related work}
\label{Sec:Relatedworks}
Our work is closely related to representation learning for causal effect estimation, which aims to transform from the original covariate space to a new representation space. The learned representation set or reduced covariate set is used in various ways in causal effect estimation, including for propensity score estimation, outcome regression or distribution balancing. In the following, we review the related work based on the way of using the learned representation set. %And we will conduct a brief review.

A doubly robust estimator makes use of propensity score and outcome regression to reduce possible misspecification of one model for causal effect estimation.~\cite{van2006targeted,funk2011doubly}. Sufficient dimension reduction methods have recently attracted attention in improving the performance of doubly robust estimators~\cite{liu2018alternative,ma2019robust,ghosh2020sufficient}. Liu et al. adopted sufficient dimension reduction for predicting
propensity score only~\cite{liu2018alternative}. Ma et al. utilized sparse sufficient dimension reduction to estimate the propensity score and recover the outcome model~\cite{ma2019robust}. Ghosh et al. considered efficient semi-parametric sufficient dimension reduction methods in all nuisance models, and then combined these into classical imputation and inverse probability weighting (IPW) estimators~\cite{ghosh2020sufficient}. However, doubly robust estimators require the specific parameters of models, and these parameters are likely to be inconsistently estimated~\cite{benkeser2017doubly}.

The most relevant work to ours is the matching method developed by Luo et al.~\cite{luo2019matching}. The work considered sufficient dimension reduction for building models on sub-datasets containing the treated samples and the control samples to construct two low-dimensional representation sets as the balance scores for matching, but not for identifying a deconfounding set. When the number of samples in a dataset is small, dividing it into two sub-datasets will reduce the performance of the sufficient dimension reduction method. %Moreover, splitting the data into two sub-datasets will also reduce the validity of the data.

%Other representation learning methods for causal effect estimation are deep learning~\cite{shalit2016bounding}.
Recently, a number of deep learning methods have been developed for causal effect estimation from observational data~\cite{shalit2017estimating,yao2018representation,shi2019adapting}. With these methods, the learning of representation set aims to balance the distributions of the treated and control groups. The advantage of deep learning methods is that they can capture complex nonlinear representations and handle high-dimensional data with large sample size. However, massive parameter turning is very difficult, and low interpretability limits their applications.

Additionally, many machine learning models have been designed for causal inference such as trees-based methods~\cite{hill2011bayesian,athey2016recursive,kunzel2019metalearners}, and re-weighting methods~\cite{rosenbaum1983central,zubizarreta2015stable,kuang2017estimating}. Since they do not involve representation learning, these methods are not directly related to our work in this paper.

Compared with the above works, our work provides a theorem to ensure that the set of the reduced covariates is a deconfounding set for unbiased causal effect estimation. As far as we know, this conclusion has not been reported before. This theorem enables to directly control confounding bias in a low dimensional space.

\section{Notations and assumptions}
\label{subsec:notations}
We consider a binary treatment variable $W$ ($W=1$ for treated and $0$ for control). The potential outcomes $Y(w)$ is relative to a specific treatment $W=w (w\in \{0, 1\})$. For each sample (individual) $i$, there is a pair of potential outcomes, $(Y_{i}(0), Y_{i}(1))$. Only one of the potential outcomes can be observed, and the other one is counterfactual~\cite{rubin1974estimating,robins1986new}. %The observed outcomes of $n$ samples are denoted as factual outcomes $Y\in\mathbb{R}^{n\times 1}$, where
We use $Y_i \in \mathbb{R}$ to denote observed outcome of sample $i$, and we have $Y_{i}=w_i*Y_{i}(1)+(1-w_i)*Y_{i}(0)$. We omit the subscript $i$ when the meaning is clear. %The unobserved potential outcomes of $n$ samples are denoted as $Y^{*}(W)$.

Let $\mathbf{X}\in\mathbb{R}^{p\times 1}$ be a set of pretreatment variables, where $p$ denotes the dimensions of variables. We make the pretreatment assumption, i.e. each variable in $\mathbf{X}$ is measured before assigning the treatment variable $W$ and observing the response variable $Y$. This is a realistic assumption as it reflects how a sample is obtained in many application areas such as economics and epidemiology~\cite{hill2011bayesian,imbens2015causal,abadie2016matching}.
Given a dataset $\mathbf{D}$ containing $n$ samples of $(\mathbf{X}, Y)$, the average causal effect ($ACE$) and average causal effect on the treated samples ($ACT$) can be estimated by the following equations respectively.

\begin{equation}
\label{eq:ACE}
\begin{split}
ACE  &= \mathbb{E}[Y(1)-Y(0)] = \sum_z[\mathbb{E}(Y\mid w, \mathbf{Z}=z)\\
&-\mathbb{E}(Y\mid w', \mathbf{Z}=z)]Pr(\mathbf{Z}=z)
\end{split}
\end{equation}

\begin{equation}
\label{eq:ACT}
\begin{split}
ACT  &= \mathbb{E}[Y(1)-Y(0)\mid w]  \\
 &= \sum_z[\mathbb{E}(Y\mid w, \mathbf{Z}=z)]Pr(\mathbf{Z}=z)
\end{split}
\end{equation}
\noindent where $w$, $w'$ and $\mathbb{E}(\cdot)$ refer to $W=1$, $W=0$ and the expectation function, respectively. $\mathbf{Z}$ is a deconfounding set and is what we focus on in this paper. To estimate $ACE$ or $ACT$ from observational data, we make the following two assumptions which are commonly used in causal inference literature~\cite{imbens2015causal}.

\begin{assumption}[unconfoundedness]
\label{assum:001}
The potential outcomes are independent of the treatment variable $W$ given all the other variables $\mathbf{X}$. Formally, $(Y(0), Y(1))\indep W|\mathbf{X}$.
\end{assumption}

\begin{assumption}[Overlap]
\label{assum:002}
Every sample has a nonzero probability to receive treatment $1$ or $0$ when conditioned on the pretreatment variables $\mathbf{X}$, i.e. $0 < Pr(W=1|\mathbf{X}) < 1$.
\end{assumption}

The unconfoundedness assumption means that there is ``\emph{no hidden confounder}'' in the system. The purpose of this paper is to find a deconfounding set $\mathbf{Z}$ such that $(Y(0), Y(1))\indep W|\mathbf{Z}$ holds, i.e. the spurious association between $W$ and $Y$ are blocked by the set $\mathbf{Z}$. In this paper, we use a \emph{causal graphical model} when discovering a deconfounding set $\mathbf{Z}$ from observational data.

A directed acyclic graph (DAG) $\mathcal{G}$ is a graph that includes directed edges and does not contain directed cycles. In a DAG $\mathcal{G}$, a path is a sequence of consecutive edges. A directed edge ``$\rightarrow$'' denotes a cause-effect relationship, e.g. $X_i\rightarrow X_j$ indicates that $X_i$ is a direct cause (or parent) of $X_j$, and equivalently $X_j$ is a direct effect (or child) of $X_i$. A node $X_i$ is a collider if there are two (or more) edges pointing to it, i.e. $\rightarrow X_i \leftarrow$. The independencies between variables in a DAG can be read off the DAG $\mathcal{G}$ based on $d$-separation, as defined as follows.

\begin{definition}[$d$-separation~\cite{pearl2009causality}]
    \label{d-separation}
    A path $\pi$ in a DAG $\mathcal{G}$ is said to be $d$-separated (or blocked) by a set of nodes $\mathbf{Z}$ if and only if (1) $\pi$ contains a chain $X_i \rightarrow X_k \rightarrow X_j$ or a fork $X_i \leftarrow X_k \rightarrow X_j$ node such that the middle node $X_k$ is in $\mathbf{Z}$, or (2) $\pi$ contains a collider $X_k$ such that $X_k$ is not in $\mathbf{Z}$ and no descendant of $X_k$ is in $\mathbf{Z}$.
\end{definition}

When a DAG $\mathcal{G}$ is given, the back-door criterion can be used to determine if $\mathbf{Z}\subseteq\mathbf{X}$ is sufficient for identifying the causal effects of $W$ on $Y$~\cite{pearl2009causality}. %A blocked path means that the flow of information (or of dependency) between the variables is stopped in the DAG $\mathcal{G}$~\cite{pearl2009causality}.

\begin{definition}[Back-door criterion]
  \label{def:backdoorcrite}
For an ordered pair of variables $(W, Y)$, a set of variables $\mathbf{Z}$ is said to satisfy the back-door criterion in a given DAG $\mathcal{G}$ if
\begin{enumerate}
  \item $\mathbf{Z}$ does not contain a descendant node of $W$;
  \item $\mathbf{Z}$ blocks every back-door path between $W$ and $Y$ (i.e. paths between $W$ and $Y$ containing an arrow into $W$).
\end{enumerate}

\end{definition}
%such $\mathbf{Z}$ is named as back-door set.
If we can find a set of variables $\mathbf{Z}$ which satisfies the back-door criteria, then $\mathbf{Z}$ is a proper deconfounding set or adjustment set, and $ACE$ (or $ACT$) can be estimated from data by adjusting for $\mathbf{Z}$ as shown in Eq.(\ref{eq:ACE}) (or Eq.(\ref{eq:ACT})). In order to describe how to identify a deconfounding set $\mathbf{Z}$, we need to use a manipulated DAG.

%Under the pretreatment assumption, the above condition (1) is satisfied, i.e. there is no descendant node of $W$ in $\mathcal{G}$. We will only need to find a set $\mathbf{Z}$ satisfying the second condition, and we call the set $\mathbf{Z}$ as a deconfounding set.

\begin{definition}[Manipulated DAG $\mathcal{G}_{\underline{W}}$]
The graph $\mathcal{G}_{\underline{W}}$ is a manipulated DAG of the DAG $\mathcal{G}$ when all edges outgoing from $W$ are removed from $\mathcal{G}$.
\end{definition}

Based on the above definition, in the manipulated DAG $\mathcal{G}_{\underline{W}}$ all directed paths from $W$ to $Y$ have been removed and only all back-door paths between $W$ and $Y$ are retained. Hence, a set $\mathbf{Z}$ that $d$-separates $W$ and $Y$ in $\mathcal{G}_{\underline{W}}$ will block all back-door paths between $W$ and $Y$.

\subsection{Problem setup}
\label{subsec:prosol}
In this paper, we aim at searching for a deconfounding set $\mathbf{Z}$ which is a low-dimensional representation of the set of covariates $\mathbf{X}$. The problem definition is given as follows.
%The dimension of the deconfounding set $\mathbf{Z}$ will significantly affect the performance of the causal effect estimators~\cite{de2011covariate,entner2013data}. Therefore, our purpose is to establish a theory for a class of machine-learning-based approaches for causal inference. % and the learned representation $\mathbf{Z}$ is a deconfounding set.

\begin{problem}
\label{def:prob}
We convert the problem of determining a deconfounding set from the original covariate space $\mathbf{X}$ to the problem of learning a low-dimensional representation set $\mathbf{Z}$ from $\mathbf{X}$ such that $W\indep Y\mid \mathbf{Z}$ in the manipulated DAG $\mathcal{G}_{\underline{W}}$.
\end{problem}

When the deconfounding set $\mathbf{Z}$ is found, the causal effect of $W$ on $Y$ can be estimated unbiasedly by adjusting for $\mathbf{Z}$ as in Eq.(\ref{eq:ACE}) (or Eq.(\ref{eq:ACT})).

\section{Theory and algorithm}
\label{sec:theo}
In this section, we first prove that the reduced covariates set $\mathbf{Z}=\mathbf{\Psi}^{T}\mathbf{X}$ by \emph{sufficient dimension reduction} (SDR) is sufficient to remove confounding bias in causal effect estimation. Then we presented the CESD algorithm.
%sufficient dimension reduction method can be used to estimate the causal effect of $W$ on $Y$ unbiasedly from data, i.e.

\subsection{Sufficient condition for identifying a deconfounding set}
\label{subsec:TheoreticalS}
Let's consider the treatment assignment $W$ as a binary classification problem, i.e. the probability density function of $W$ given $\mathbf{X}$ is $Pr_{W|\mathbf{X}}(w|x)$.
SDR attempts to search for a projection $\mathbf{\Psi}\in\mathbb{R}^{p\times r}$, where $r<p$, such that
\begin{equation}
\label{eq:conind}
  W\indep \mathbf{X} | \mathbf{\Psi}^{T}\mathbf{X}
\end{equation}
\noindent where $\mathbf{\Psi}^{T}\mathbf{X}$ is the orthogonal projection of $\mathbf{X}$ onto the column subspace of $\mathbf{\Psi}$, and the column subspace of $\mathbf{\Psi}$ is refer to the \emph{dimension reduction space} (DRS)~\cite{cook1996graphics,cook2009regression}. The primary interest is the \emph{central} DRS since it has a well-known invariance property~\cite{cook1996graphics,cook2009regression}.
%$\mathbf{\Psi}^{T}\mathbf{X}$ is called \emph{sufficient dimension reduction}~\cite{cook1996graphics,cook2009regression} when $\mathbf{\Psi}^{T}\mathbf{X}$ retains all the information related to $W$.

\begin{definition}[Central DRS~\cite{cook1996graphics}]
The column space of $\mathbf{\Psi}$ is a central DRS if the column space of $\mathbf{\Psi}$ is a DRS with the smallest possible dimension $r$.
\end{definition}

\begin{figure}[t]
  \centering
  \includegraphics[width=5.2cm,height=2.7cm]{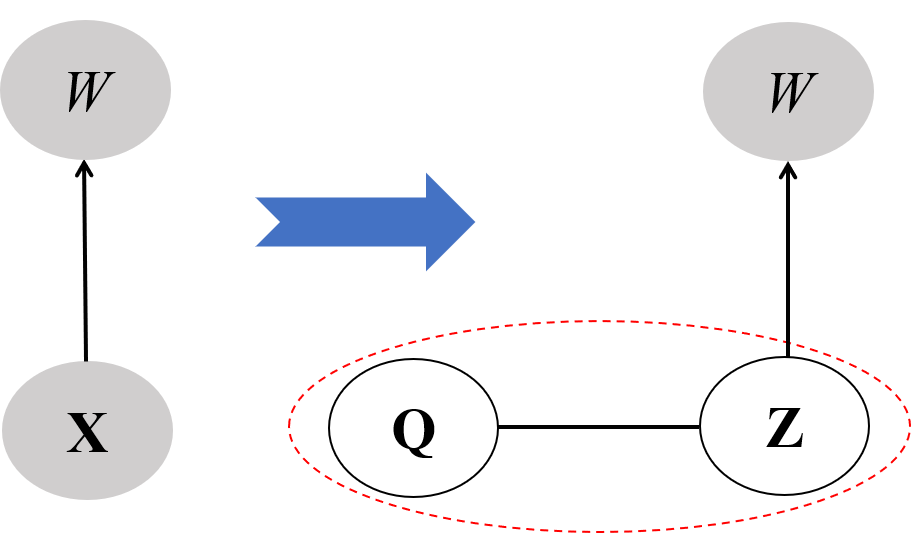}
  \caption{A graphical representation of sufficient dimension reduction, $W\indep \mathbf{Q} | \mathbf{Z}$ holds~\cite{fukumizu2004dimensionality}, where $\mathbf{X}$  is decomposed into $(\mathbf{Z, Q})$.}
  \label{fig:SDR}
\end{figure}

Identifying a projection $\mathbf{\Psi}$ makes Eq.(\ref{eq:conind}) hold is equivalent to searching for a projection $\mathbf{\Psi}$ which makes $W$ and $(\mathbf{I}-\mathbf{\Psi} ^{T})\mathbf{X}$ conditionally independent given $\mathbf{\Psi}^{T}\mathbf{X}$. That is, $\mathbf{X}$ can be decomposed into $(\mathbf{Z=\mathbf{\Psi}^{T}\mathbf{X}, \mathbf{Q}=(\mathbf{I}-\mathbf{\Psi} ^{T})\mathbf{X}})$, $\mathbf{Z}$ which is associated with $W$ and $\mathbf{Q}$ which is independent of $W$ given $\mathbf{Z}$, i.e. $W\indep \mathbf{Q} | \mathbf{Z}$ (See Fig.~\ref{fig:SDR}).
%It means that the effective projection  makes the treatment variable $W$ and $\mathbf{Q} = (\mathbf{I}-\mathbf{\Psi}^{T})\mathbf{X}$ conditionally independent given $\mathbf{Z} = \mathbf{\Psi}^{T}\mathbf{X}$, i.e. $W\indep \mathbf{Q} | \mathbf{Z}$ (See Fig.~\ref{fig:SDR}). That is, a projection $\mathbf{\Psi}$ splits $\mathbf{X}$ into two parts, $\mathbf{Z}$ is correlated to $W$ and $\mathbf{Q}$ is not correlated to $W$ given $\mathbf{Z}$.

Now we show in the following theorem that finding a deconfounding set from $\mathbf{X}$ can be converted to the problem of learning the reduced covariate set $\mathbf{Z}$ by SDR.
%The original set of pretreatment variables $\mathbf{X}$ can be split into $\mathbf{Z}$ and $\mathbf{Q}$ such that
\begin{theorem}
  \label{theo:sdrBD}
    Let $\mathbf{D}$ be a dataset that contains the treatment variable $W$, the outcome variable $Y$, and a set of all pretreatment variables $\mathbf{X}$, and $\mathcal{G}$ be the DAG representing the underlying causal structure (relationships) of the covariates. If there exists a central DRS (the column space of $\mathbf{\Psi}$) such that $W\indep \mathbf{Q}\mid \mathbf{Z}$, where $\mathbf{Z}= \mathbf{\Psi}^{T}\mathbf{X}$ and $\mathbf{Q}=(\mathbf{I}-\mathbf{\Psi} ^{T})\mathbf{X}$, then $\mathbf{Z}$ is a proper deconfounding set for estimating the causal effect of $W$ on $Y$ unbiasedly.
\end{theorem}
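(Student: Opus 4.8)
The plan is to reduce the graphical assertion of the theorem to a single conditional-independence statement about the potential outcomes, namely $(Y(0),Y(1))\indep W\mid\mathbf{Z}$, and then to establish that statement by recognising $\mathbf{Z}=\mathbf{\Psi}^{T}\mathbf{X}$ as a \emph{balancing score}. The reduction is clean: if $(Y(0),Y(1))\indep W\mid\mathbf{Z}$ holds together with overlap on $\mathbf{Z}$, then adjusting for $\mathbf{Z}$ removes all confounding bias and the formulas Eq.(\ref{eq:ACE}) and Eq.(\ref{eq:ACT}) return the true $ACE$ and $ACT$; equivalently, $\mathbf{Z}$ $d$-separates $W$ and $Y$ in $\mathcal{G}_{\underline{W}}$ as demanded by Problem~\ref{def:prob}. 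So the whole argument amounts to transporting unconfoundedness from $\mathbf{X}$ (Assumption~\ref{assum:001}) to its reduced version $\mathbf{Z}$.

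First I would dispose of the first back-door condition. Since $\mathbf{Z}=\mathbf{\Psi}^{T}\mathbf{X}$ is a deterministic (linear) function of the pretreatment covariates $\mathbf{X}$, every coordinate of $\mathbf{Z}$ is a non-descendant of $W$, so $\mathbf{Z}$ contains no descendant of $W$ and condition~1 of the back-door criterion (Definition~\ref{def:backdoorcrite}) holds automatically. The central step is then to turn the SDR hypothesis into the balancing property. Because $(\mathbf{Z},\mathbf{Q})$ is a reparametrisation of $\mathbf{X}$ (the pair determines $\mathbf{X}$, and $\mathbf{Z}$ is a function of $\mathbf{X}$), conditioning on $\mathbf{Z}$ makes the information in $\mathbf{X}$ coincide with that in $\mathbf{Q}$; hence $W\indep\mathbf{Q}\mid\mathbf{Z}$ is equivalent to $W\indep\mathbf{X}\mid\mathbf{Z}$, which is exactly Eq.(\ref{eq:conind}) and says that $\mathbf{Z}$ is a balancing score. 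In particular the propensity score satisfies $\Pr(W=1\mid\mathbf{X})=\Pr(W=1\mid\mathbf{Z})$, i.e.\ it is $\mathbf{Z}$-measurable.

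With this in hand I would run the Rosenbaum--Rubin computation. Writing $\mathbf{Y}^{*}=(Y(0),Y(1))$ and $e(\mathbf{X})=\Pr(W=1\mid\mathbf{X})$, and using iterated expectations over $\mathbf{X}$ given $(\mathbf{Y}^{*},\mathbf{Z})$,
\begin{align*}
\Pr(W=1\mid\mathbf{Y}^{*},\mathbf{Z})
&=\mathbb{E}\!\left[\,e(\mathbf{X})\mid\mathbf{Y}^{*},\mathbf{Z}\,\right]\\
&=\mathbb{E}\!\left[\,e(\mathbf{X})\mid\mathbf{Z}\,\right]=\Pr(W=1\mid\mathbf{Z}),
\end{align*}
where the first equality uses Assumption~\ref{assum:001} (conditioning additionally on $\mathbf{X}$ and noting $\mathbf{Z}$ is a function of $\mathbf{X}$), and the collapse in the second line uses that $e(\mathbf{X})=\Pr(W=1\mid\mathbf{Z})$ is $\mathbf{Z}$-measurable. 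This is precisely $(Y(0),Y(1))\indep W\mid\mathbf{Z}$, and the same balancing identity carries Assumption~\ref{assum:002} over to $\mathbf{Z}$, giving overlap on $\mathbf{Z}$. Combining this with the first paragraph yields that $\mathbf{Z}$ is a proper deconfounding set.

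I expect the main obstacle to be the bridging step rather than the final calculation. One must argue carefully that discarding the component $\mathbf{Q}$ is harmless, that is, that the equivalence $W\indep\mathbf{Q}\mid\mathbf{Z}\Leftrightarrow W\indep\mathbf{X}\mid\mathbf{Z}$ genuinely holds (this is exactly where invertibility of the map $\mathbf{X}\mapsto(\mathbf{Z},\mathbf{Q})$ and the $\mathbf{X}$-measurability of $\mathbf{Z}$ are both used), and that this purely statistical balancing-score property is interchangeable with the graphical back-door statement the theorem is phrased in. Once that correspondence between the SDR/independence world and the DAG/$d$-separation world is pinned down, the remainder is the standard balancing-score computation displayed above.
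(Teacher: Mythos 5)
Your proof is correct, but it takes a genuinely different route from the paper's. The paper works graphically: it first argues that, under the pretreatment and unconfoundedness assumptions, $W\indep Y\mid\mathbf{X}$ holds in the manipulated DAG $\mathcal{G}_{\underline{W}}$, rewrites this as $W\indep Y\mid(\mathbf{Z},\mathbf{Q})$, and then applies the contraction and decomposition properties of conditional independence together with the SDR hypothesis $W\indep\mathbf{Q}\mid\mathbf{Z}$ to conclude $W\indep Y\mid\mathbf{Z}$ in $\mathcal{G}_{\underline{W}}$, i.e.\ that $\mathbf{Z}$ blocks every back-door path. You instead stay entirely in the potential-outcomes world: you convert $W\indep\mathbf{Q}\mid\mathbf{Z}$ into the balancing-score property $W\indep\mathbf{X}\mid\mathbf{Z}$ (legitimate, since $(\mathbf{Z},\mathbf{Q})$ is a bijective reparametrisation of $\mathbf{X}$ and $\mathbf{Z}$ is $\mathbf{X}$-measurable --- this is exactly Eq.~(\ref{eq:conind})), and then run the Rosenbaum--Rubin iterated-expectation argument to transport $(Y(0),Y(1))\indep W\mid\mathbf{X}$ to $(Y(0),Y(1))\indep W\mid\mathbf{Z}$. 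At bottom the two arguments share the same semi-graphoid skeleton (your displayed computation is a concrete instantiation of contraction followed by decomposition), but they are applied to different objects: the paper applies it to the observed $Y$ inside $\mathcal{G}_{\underline{W}}$, while you apply it to the pair of potential outcomes under the observational distribution. Your version buys two things the paper's proof glosses over: it avoids combining a probabilistic independence (the SDR condition) with a $d$-separation statement in a manipulated graph, which the paper does somewhat informally, and it explicitly carries the overlap assumption over to $\mathbf{Z}$, which the paper never addresses. The paper's version, in exchange, delivers the conclusion in the graphical form in which Problem~\ref{def:prob} and the back-door criterion are stated, and is shorter. Your closing concern about bridging the statistical and graphical formulations is well placed, but since the theorem's operative claim is unbiased estimation via Eq.~(\ref{eq:ACE}), your potential-outcomes conclusion suffices on its own.
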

\begin{proof}
    Under the pretreatment assumption of $\mathbf{X}$, there is no descendant node of $W$ in $\mathbf{X}$. Hence, there is a directed path $W\rightarrow Y$ in the DAG $\mathcal{G}$. Under the unconfoundedness assumption, there are no hidden confounders, and in the manipulated DAG $\mathcal{G}_{\underline{W}}$, there are only back-door paths between $W$ and $Y$. Hence, all back-door paths between $W$ and $Y$ are blocked by the set of $\mathbf{X}$, i.e. $W\indep Y|\mathbf{X}$ holds in $\mathcal{G}_{\underline{W}}$.

    We now prove that if $\mathbf{Z}= \mathbf{\Psi}^{T}\mathbf{X}$ satisfies $W\indep \mathbf{Q} | \mathbf{Z}$, then $\mathbf{Z}$ is sufficient to block all block-door paths between $W$ and $Y$. We have $W\indep Y|\mathbf{X}$ in $\mathcal{G}_{\underline{W}}$ and $\mathbf{X}=(\mathbf{Z, Q})$, then $W\indep Y|(\mathbf{Z, Q})$ holds in $\mathcal{G}_{\underline{W}}$. As $W\indep \mathbf{Q} | \mathbf{Z}$ is satisfied, $W\indep (Y, \mathbf{Q})|\mathbf{Z}$ holds in $\mathcal{G}_{\underline{W}}$ by the contraction property of conditional independence. According to the decomposition property of the conditional independence, $W\indep (Y, \mathbf{Q})|\mathbf{Z}$ is sufficient to infer $W\indep Y|\mathbf{Z}$ in $\mathcal{G}_{\underline{W}}$. As there are only back-door paths between $W$ and $Y$ in $\mathcal{G}_{\underline{W}}$, $\mathbf{Z}$ is sufficient to block all such paths since $W\indep Y|\mathbf{Z}$ holds in $\mathcal{G}_{\underline{W}}$. Therefore, $\mathbf{Z}$ is a proper deconfounding set. %for estimating the causal effect of $W$ on $Y$ unbiasedly.
\end{proof}

Theorem~\ref{theo:sdrBD} has shown that the reduced covariates set $\mathbf{Z}$ by SDR is sufficient to remove confounding bias when estimating the causal effects of $W$ on $Y$ from data.

\subsection{Deconfounding set identification using SDR}
\label{subsec:kernelmethod}
In this section, we use a kernel-based SDR method to identify a deconfounding set from data. We utilize the cross-covariance operators on \emph{reproducing kernel Hilbert space} (RKHS) ~\cite{aronszajn1950theory} $\mathcal{H}$ to establish an objective function for dimensionality reduction. By default, we use the \emph{Gaussian kernel}, i.e. $k(x_i, x_j) = exp(-\frac{\parallel x_i - x_j\parallel^{2}}{2\delta^{2}})$, where $\delta$ is the kernel width.

Given two RKHS, $(\mathcal{H}_1, k_1)$ and $(\mathcal{H}_2, k_2)$ which are over the measured spaces $(\Omega_1, \mathfrak{B}_1)$ and $(\Omega_2, \mathfrak{B}_2)$, with the positive kernels $k_1$, $k_2$ measurable. For the pair of $W$ and the set of $\mathbf{X}$ on $\Omega_1 \times\Omega_2$, the cross-covariance operator from $\mathcal{H}_1$ to $\mathcal{H}_2$ is defined by the relation:
\begin{equation}
\label{eq:cco}
\begin{aligned}
<g,\Sigma_{W\mathbf{X}}f>_{\mathcal{H}_2} =
 \mathbb{E}_{\mathbf{X}W}[f(\mathbf{X})g(W)] -& \\ \mathbb{E}_{\mathbf{X}}[f(\mathbf{X})]\mathbb{E}_{W}[g(W)]
\end{aligned}
\end{equation}
for all $f\in \mathcal{H}_1$ and $g\in \mathcal{H}_2$.
Hence, the conditional covariance operator $\Sigma_{WW|\mathbf{Z}}$ on $\mathcal{H}_1$ can be defined as follows.

\begin{equation}
\label{eq:007}
\Sigma_{WW|\mathbf{Z}} := \Sigma_{WW} - \Sigma_{W\mathbf{Z}}\Sigma_{\mathbf{Z}\mathbf{Z}}^{-1}\Sigma_{\mathbf{Z}W}
\end{equation}

Theorem 7 in~\cite{fukumizu2004dimensionality} has shown that $\Sigma_{WW|\mathbf{Z}}\geq \Sigma_{WW|\mathbf{X}}$ for any $\mathbf{Z}$, and $\Sigma_{WW|\mathbf{X}} - \Sigma_{WW|\mathbf{Z}} =0 \Leftrightarrow W\indep \mathbf{Q} | \mathbf{Z}$. That is, minimizing $\hat{\Sigma}_{WW|\mathbf{Z}}$ is the most informative direction for searching the optimal $\mathbf{Z}$. Hence, searching for a set of the reduced covariates $\mathbf{Z}$ such that $W\indep \mathbf{Q}|\mathbf{Z}$ holds is equivalent to optimize the minimized conditional covariance operator $\hat{\Sigma}_{WW|\mathbf{Z}}$. The determinant of $\hat{\Sigma}_{WW|\mathbf{Z}}$ can be written as follows. %In fact, minimizing $\Sigma_{WW|\mathbf{Z}}$ is equivalent to %optimizing the following minimization objective function~\cite{fukumizu2004dimensionality}.

\begin{equation}
\label{eq:009}
  \det \hat{\Sigma}_{WW|\mathbf{Z}} = \frac{\det \hat{\Sigma}_{[W\mathbf{Z}][W\mathbf{Z}]}}{\det \hat{\Sigma}_{\mathbf{Z}\mathbf{Z}}}
\end{equation}

\noindent where $\hat{\Sigma}_{[W\mathbf{Z}][W\mathbf{Z}]} = \begin{pmatrix}
\hat{\Sigma}_{WW} & \hat{\Sigma}_{W\mathbf{Z}}\\
\hat{\Sigma}_{\mathbf{Z}W} & \hat{\Sigma}_{\mathbf{Z}\mathbf{Z}}
\end{pmatrix} = \\
\begin{pmatrix}
(\hat{K}_{W} + \epsilon \mathbf{I}_{n})^{2} & \hat{K}_{W}\hat{K}_{\mathbf{Z}} \\
  \hat{K}_{\mathbf{Z}}\hat{K}_{W}  & (\hat{K}_{\mathbf{Z}} + \epsilon \mathbf{I}_{n})^{2}
\end{pmatrix}$, where $\epsilon$ is a positive regularization parameter. $\hat{K}_{W}$ and $\hat{K}_{\mathbf{Z}}$ are the centralized \emph{Gram matrices} defined as follows.
\begin{equation}\label{eq:010}
\hat{K}_{W} = (\mathbf{I}_{n} - \frac{1}{n}\mathbf{1}_{n}\mathbf{1}^{T}_{n})G_{W}(\mathbf{I}_{n} - \frac{1}{n}\mathbf{1}_{n}\mathbf{1}^{T}_{n})
\end{equation}
\begin{equation}\label{eq:011}
\hat{K}_{\mathbf{Z}} = (\mathbf{I}_{n} - \frac{1}{n}\mathbf{1}_{n}\mathbf{1}^{T}_{n})G_{\mathbf{Z}}(\mathbf{I}_{n} - \frac{1}{n}\mathbf{1}_{n}\mathbf{1}^{T}_{n})
\end{equation}
\noindent where $(G_{W})_{i,j} = k(w_i, w_j)$, $(G_{\mathbf{Z}})_{i,j} = k(\mathbf{Z}_i, \mathbf{Z}_j)$ and $\mathbf{1}_n=(1,\dots ,1)^{T}$ is a vector with all elements equal to 1.

To solve Eq.(\ref{eq:009}), gradient descent with line search can be used. The matrix of parameters is updated iteratively by

\begin{equation}
\begin{aligned}
\mathbf{\Psi}^{t+1} &= \mathbf{\Psi}^{t} - \beta \frac{\partial \log \det \hat{\Sigma}_{WW|\mathbf{Z}}}{\partial \mathbf{\Psi}} \\
&= \mathbf{\Psi}^{t} -\beta Tr[\hat{\Sigma}^{-1}_{WW|\mathbf{Z}}\frac{\partial\hat{\Sigma}_{WW|\mathbf{Z}}}{\partial \mathbf{\Psi}}]
\end{aligned}
\label{eq:012}
\end{equation}

\noindent where the trace norm in Eq.(\ref{eq:012}) can be rewritten as $2\epsilon Tr[\hat{\Sigma}^{-1}_{WW|\mathbf{Z}}\hat{K}_{W}(\hat{K}_{\mathbf{Z}}+\epsilon\mathbf{I}_{n})^{-1}\frac{\partial\hat{K}_{\mathbf{Z}}}{\partial\mathbf{\Psi}}(\hat{K}_{\mathbf{Z}}+\epsilon\mathbf{I}_{n})^{-2}\hat{K}_{\mathbf{Z}}\hat{K}_{W}]$.
All of these matrices in Eq.(\ref{eq:012}) can be obtained directly based on Eq.(\ref{eq:010}) and Eq.(\ref{eq:011}). Therefore, the problem of identifying a deconfounding set $\mathbf{Z}$ can be achieved by optimizing Eq.(\ref{eq:009}).% to obtain $\mathbf{Z} = \mathbf{\Psi}^{T}\mathbf{X}$. %The learned reduced representation $\mathbf{Z}= \mathbf{\Psi}^{T}\mathbf{X}$ keeps all relevant information of $\mathbf{X}$ in terms of the treatment variable $W$.

\subsection{NNM using the discovered deconfounding set}
\label{subsec:algCESD}
Given the learned deconfounding set $\mathbf{Z}$, our next step is to infer the counterfactual outcome denoted as $Y^{*}_i(w_i)$. Nearest Neighbor Matching (NNM) is a well-known method for such inference~\cite{rubin1973matching,abadie2006large}. With NNM, the unobserved outcome or counterfactual outcome of an individual $i$ is imputed by the observed outcome of an individual $j$ who has the similar covariates ($\mathbf{Z}$ values) in the opposite treatment group. The Mahalanobis distance is used to measure the distance of each pair $(z_i, z_j)$ over the space of the deconfounding set $\mathbf{Z}$ as follows.

\begin{equation}
\label{eq:mah}
Dist(z_i, z_j) = \{(z_i - z_j)^{T}\hat{\Sigma}_{z}^{-1}(z_i - z_j)\}^{\frac{1}{2}}
\end{equation}

$z_i$ and $z_j$ are the value vector of the deconfounding set $\mathbf{Z}$ of the $i$-th and $j$-th individuals. The strategy of NNM can be formalized as follows.

\begin{equation}
\label{eq:NNM}
Y^{*}_{i}(w_i) = Y_{k}(1-w_{i}); \quad k = \argmin_{j\in\mathbf{D}_{(1-w_{i})}} Dist(z_i, z_j)
\end{equation}

\noindent where $\mathbf{D}_{(1-w_i)}$ is the dataset with the treatment of $1-w_i$.

The complete procedure of CESD is listed in Algorithm~\ref{pseudocode01}.

%The implementation is

\textbf{Complexity analysis}: Three parts contribute to the time complexity of CESD. The calculation of $\det \hat{\Sigma}_{WW|\mathbf{Z}}$ is matrix multiplication which has time complexity of $\textbf{O}(np^{2})$. Solving Eq.(\ref{eq:009}) requires a linear search, i.e. $\textbf{O}(np)$. The calculation of NNM is $\textbf{O}(nr^{2})$. Therefore, the time complexity of CESD is $\textbf{O}(np^{2})$.

\begin{algorithm}[tp]
\caption{\underline{C}ausal \underline{E}ffect estimator by using \underline{SD}R (CESD)}
\label{pseudocode01}
\small
\begin{flushleft}
\noindent {\textbf{Input}}: Dataset $\mathbf{D}$ with $W$, $Y$ and pretreatment variables $\mathbf{X}$. The parameters $\epsilon$, $\delta$, the maximum number of Iteration $Ite$ and the dimension of the reduced covariates $r$. \\
\noindent {\textbf{Output}}: Causal effect
\end{flushleft}
\begin{algorithmic}[1]
\STATE {Computing Eq.(\ref{eq:010}) and Eq.(\ref{eq:011})}
\STATE {Solving $\det\hat{\Sigma}_{WW|\mathbf{Z}}$ by Eq.(\ref{eq:009})}
\STATE {$t=1$}
\WHILE {$t\leq Ite$ or $\mid\mathbf{\Psi}^{t+1} - \mathbf{\Psi}^{t}\mid \le \epsilon$}
\STATE {$\mathbf{\Psi}^{t+1} = \mathbf{\Psi}^{t} -\beta Tr[\hat{\Sigma}^{-1}_{WW|\mathbf{Z}}\frac{\partial\hat{\Sigma}_{WW|\mathbf{Z}}}{\partial \mathbf{\Psi}}]$}
\ENDWHILE
\STATE {Computing $\mathbf{Z} = \mathbf{\Psi}^{T}\mathbf{X}$}
\STATE {Computing $Dist(z_i, z_j)$ over $\mathbf{Z}$.}
\STATE {Imputing $Y^{*}_{i}(w_i)$ via Eq.(\ref{eq:NNM}).}
\STATE {Calculating the causal effect of $W$ on $Y$.}
\RETURN{Causal effect}
\end{algorithmic}
\end{algorithm}

\section{Experiments}
\label{sec:exp}
Evaluating causal effect estimator is very challenging since we rarely have the ground truth of causal effects on real-world datasets. Following existing literature, we evaluate CESD on five datasets, including two semi-synthetic real-world datasets, including IHDP~\cite{hill2011bayesian} and Twins~\cite{louizos2017causal}; and three real-world applications, Job training~\cite{lalonde1986evaluating}, Cattaneo2~\cite{ghosh2020sufficient} and RHC~\cite{connors1996outcomes}.

The developed CESD method consists of kernel dimension reduction and NNM which are implemented by the $\mathbb{R}$ packages \emph{KDRcpp}\footnote{https://github.com/aschmu/KDRcpp} and \emph{Matching}~\cite{ho2007matching}, respectively. To evaluate the performance of CESD, we compare it against the state-of-the-art causal effect estimators including \textbf{MDM}: Mahalanobis distance matcing~\cite{rubin1979using}; \textbf{PSM}: propensity score matching with logistic regression~\cite{rubin1973matching}; \textbf{CBPS}\footnote{\url{https://cran.r-project.org/web/packages/CBPS/index.html}}: covariate balancing propensity score~\cite{imai2014covariate}; \textbf{PAW}: the set of causes of $W$ with PSM~\cite{haggstrom2018data}; \textbf{PAY}: the set of causes of $Y$ with PSM~\cite{haggstrom2018data}; \textbf{CausalForest}\footnote{\url{https://cran.r-project.org/web/packages/grf/index.html}}: Random forest regression for estimating causal effect~\cite{wager2018estimation}; \textbf{Shrinkage}\footnote{\url{https://www.stat4reg.se/software/sdrcausal}}: Inverse probability weighting estimator based on SDR for average causal effect estimation~\cite{ghosh2020sufficient} and \textbf{MSDR}: matching using SDR~\cite{luo2019matching}.

\textbf{Parameter settings}. For CausalForest, we set the number of trees to 200. For Shrinkage, MSDR, and CESD, the dimension of the reduced covariates $r$ is set to 2. For CESD, the parameters $\epsilon$, $\delta$ and $Ite$ are set to 0.0001, 5 and 20 respectively, for all datasets.

\textbf{Evaluation metrics}. We evaluate the performance of all algorithms using the root-mean-square error (RMSE) and the estimation bias (\%) (relative error) when the ground truth is available. Due to page limit, the detailed results are provided in the Supplement. In the paper, we visualize the estimated causal effects and their confidence intervals with the confidence level of 95\%.  %drawing the estimated values of the causal effect and the confidence interval with 95\%. Due to the space limitation, the detailed results of all approaches are listed in the supplement.

%For datasets with ground truth, we employ the root-mean-square error (RMSE) and the estimated Bias (\%) (it is relative error against the ground truth.) to evaluate the performance of these estimates. datasets without ground truth, we evaluate CESD with the baselines used in the previous literature.

\subsection{Experiments on the two semi-synthetic real-world datasets}
\label{subsec:threerealapplications}
\subsubsection{IHDP}
\label{subsec:IHDP}
The IHDP dataset is an observational data from a randomized trial based on the Infant Health and Development Program (IHDP), which investigated the effects of intensive high-quality care on low-birth-weight and premature infants~\cite{hill2011bayesian}. The indicator variable, representing with/without intensive high-quality care, is used as treatment variable. IHDP consists of 747 samples with 24 pretreatment variables, among which 608 are control units (samples) and 139 are treated units. The simulated outcomes are generated by using setting ``A'' in the $\mathbb{R}$ package \emph{npci}\footnote{https://github.com/vdorie/npci}, and the ground truth of the causal effect, i.e. 4.36 is obtained by the noiseless outcome according to the same procedures suggested by Hill~\cite{hill2011bayesian}.

\begin{figure}[t]
  \centering
  \includegraphics[width=8.5cm,height=4.5cm]{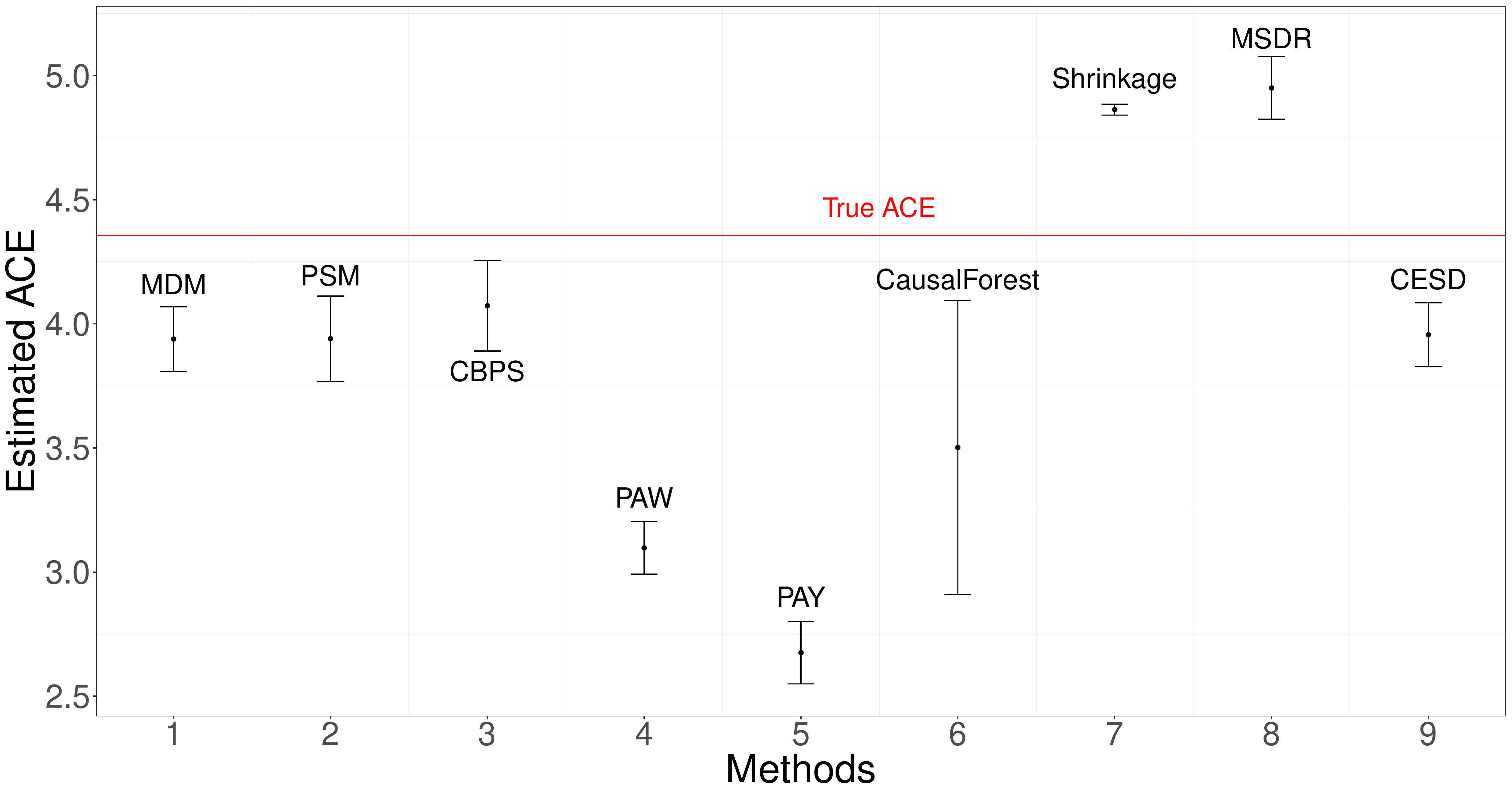}
  \caption{Estimated causal effects on the IHDP dataset w.r.t. 95\% confidence interval. The red line represents the ground truth ACE.}
  \label{fig:IHDP_plot}
\end{figure}

The experimental results of all estimators are displayed in Fig.~\ref{fig:IHDP_plot}. According to the results, these algorithms can be divided into two groups: Group I including methods whose estimates are close to the ground truth, i.e. MDM, PSM, CBPS, Shrinkage, MSDR, and CESD; the remaining methods (including PAW, PAY, and CausalForest) are in Group II. The performance of CESD is competitive with the methods in Group I and better than the methods in Group II.

\subsubsection{Twins}
\label{subsec:twins}
The Twins dataset is collected from twin births in the USA between 1989 and 1991, with infants having birth weight less than 2000g~\cite{almond2005costs}. We remove samples with missing values from the original dataset and have 4821 twin pairs left with 40 covariates. The weight of an infant is considered as the treatment variable: $W$=1 for a baby who is heavier in the twin pair; $W$=0 otherwise. The mortality after one year is the outcome. The ground truth causal effect is -0.025. To simulate an observational study, we follow Louizos et al.'s suggestion~\cite{louizos2017causal} to randomly select one of the two twins as the observed infant and hide the other by applying the setting: $W_i|x_i\sim Bern(sigmoid(\beta^{T}\mathbf{x}+\varepsilon))$, where $\mathbf{x}$ denotes the 40 other covariates, and $\beta^{T}\sim\mathcal{U}((-0.1,0.1)^{40\times1})$ and $\varepsilon\sim\mathcal{N}(0,0.1)$.

The experimental results of all methods are presented in Fig~\ref{fig:Twins_plot}. From the figure, the performance of the estimators can be divided into Group I, including PSM, CBPS, PAY, MSDR, and CESD, whose results are close to the true ACE; and Group II, including the remaining methods. We see that the performance of CESD is competitive with the methods in Group I and better than the methods in Group II.

\begin{figure}[t]
  \centering
  \includegraphics[width=8.5cm,height=4.5cm]{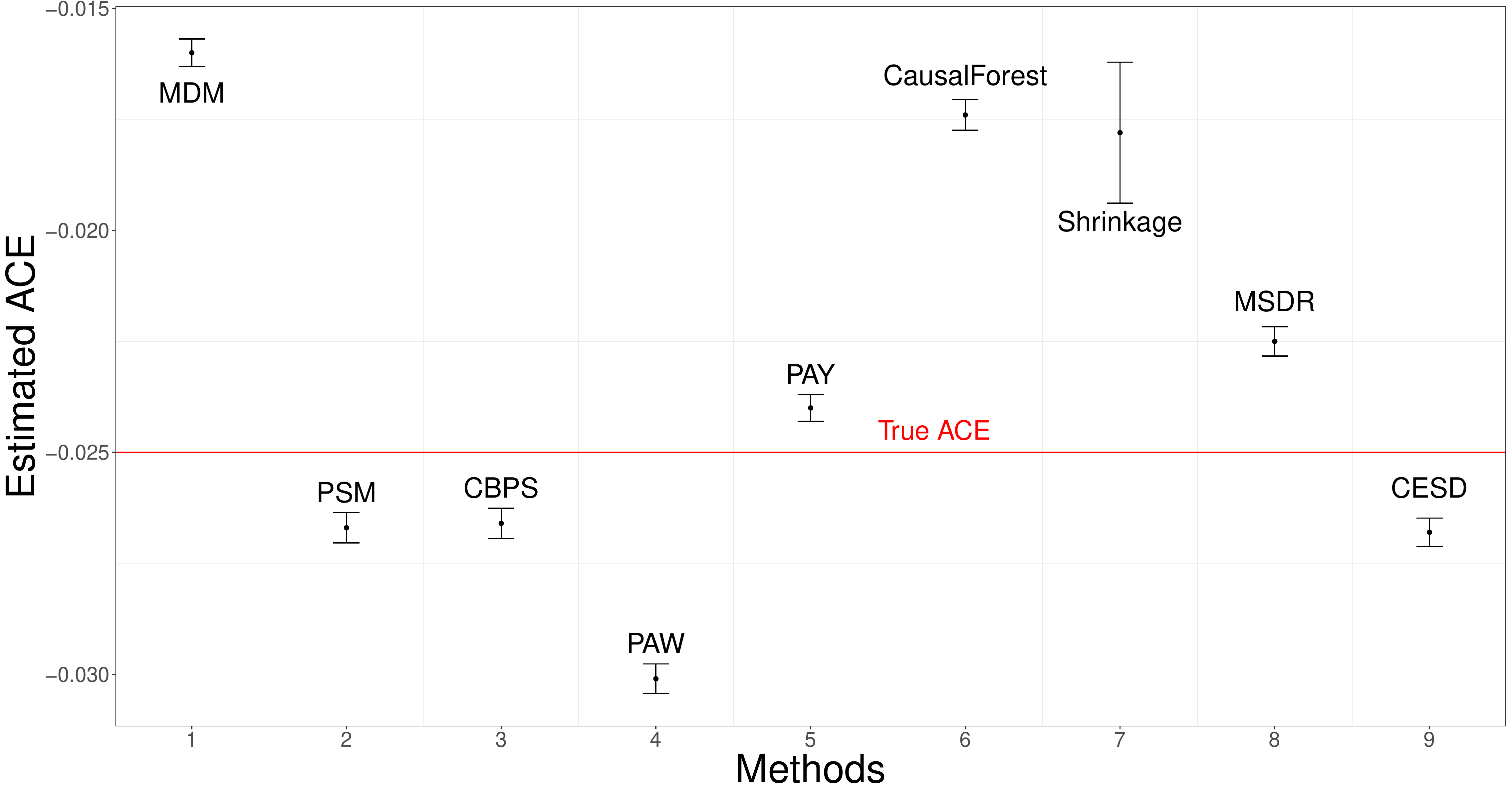}
  \caption{Estimated causal effects on the Twins dataset w.r.t. 95\% confidence interval. The red line represent the group truth ACE.}
  \label{fig:Twins_plot}
\end{figure}

\subsection{Evaluation with three real-world datasets}
\label{subsec:threerealapplications}
\subsubsection{Jobs}
\label{subsec::jobtraining}
The Job training dataset (or Jobs) is a widely used benchmark dataset in causal inference, which consists of the original LaLonde dataset (297 treated samples and 425 control samples)~\cite{lalonde1986evaluating} and the Panel Study of Income Dynamics (PSID) observational group (2490 control samples)~\cite{imai2014covariate}. There are 9 covariates, including age; schooling in years; indicators for black and Hispanic; marital status; school degree; previous earnings in 1974, 1975; and whether the 1974 earnings variable is missing. The job training status, i.e. with/without job training, is defined as the treatment variable $W$. The earning in 1978 is defined as the outcome variable $Y$. Because the dataset contains records of people taking part in the training only, as in~\cite{lalonde1986evaluating}, we estimate the $ACT$ using CESD and all comparing methods, against the ground truth $ACT$, which is \$886 with a standard error of \$448.

We draw the results of all methods in Fig.~\ref{fig:Jobs_plot}. From the figure, we see that CBPS and CESD are in Group I where the methods' estimates are fall within the empirical estimation interval, and other methods are not in the interval and hence in Group II. CESD achieves completive results with CBPS. Moreover, the estimates by Group II methods lead to the opposite conclusion, i.e. employees who participate in job training receive fewer incomes than employees who do not participate in job training.

%Together with the experimental results on IHDP and Twins, we know that only CBPS and CESD work well on the three datasets with the ground truth known. Hence, the performance of CESD is competitive with CBPS and outperforms the state-of-the-art estimates.

%Fig.~\ref{fig:Jobs_plot} shows the ground truth of $ACT$, and the estimates of all methods. We can know from Fig.~\ref{fig:Jobs_plot} that CESD estimator achieves the best performance in terms of RMSE and Bias. The result of CESD has the smallest standard error, which is close to \$448 than the estimated standard error of all comparing methods. In contrast, the methods estimate ACT to be -$516.62$ (PSM), -$867.37$ (PAW), -$671.33$ (PAY), -$5755.78$ (CausalForest), -$45.33$ (Shrinkage) and -$122.68$ (MSDR) , respectively, which results in the opposite conclusion. The experimental results further validate the effectiveness of the deconfounding set $\mathbf{Z}=\mathbf{\Psi}^{T}\mathbf{X}$.

\begin{figure}[t]
  \centering
  \includegraphics[width=8.5cm,height=4.5cm]{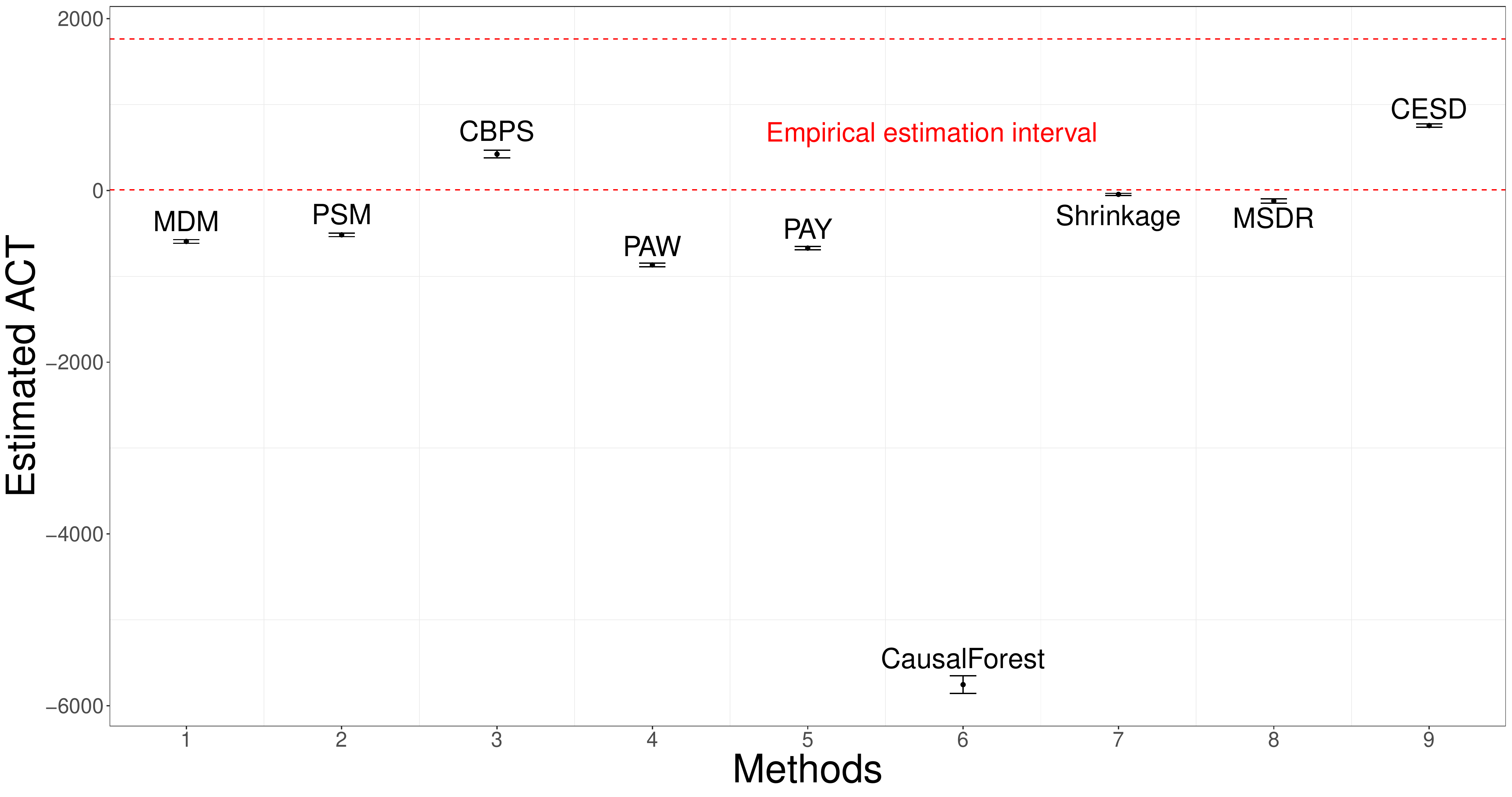}
  \caption{Estimated causal effects on the Jobs dataset. The two dotted lines denote empirical estimated interval with 95\% confident interval.}
  \label{fig:Jobs_plot}
\end{figure}

\subsubsection{Cattaneo2}
\label{subsec::catteon}
The dataset~\cite{cattaneo2010efficient}, is usually used to study the $ACE$ of maternal smoking status during pregnancy ($W$) on babies' birth weight (in grams)\footnote{\url{http://www.stata-press.com/data/r13/cattaneo2.dta}}.
Cattaneo2 consists of birth weights of 4642 singleton births in Pennsylvania, USA~\cite{almond2005costs,cattaneo2010efficient}. Cattaneo2 contains 864 smoking mothers ($W$=1) and 3778 non-smoking mothers ($W$=0). The dataset contains several covariates: mother's age, mother's marital status, an indicator for the previous infant where the newborn died, mother's race, mother's education, father's education, number of prenatal care visits, months since last birth, an indicator of firstborn infant and indicator of alcohol consumption during pregnancy. The authors~\cite{almond2005costs} found a strong negative effect of mother smoking on the weights of babies about $200g$ to $250g$ lighter for a baby with a mother smoking during pregnancy.

All results on this dataset are shown in the Fig~\ref{fig:Cattaneo2_plot}. The range of the estimated causal effects of smoking on babies' birth weight is -$285.36g$ to -$152g$. From Fig~\ref{fig:Cattaneo2_plot}, we see that only the estimated $ACE$ by CESD falls within the empirical estimated interval (-$250g$, -$200g$), i.e. the estimated effects by CESD is consistent with the original study~\cite{almond2005costs}.

\begin{figure}[t]
  \centering
  \includegraphics[width=8.5cm,height=4.5cm]{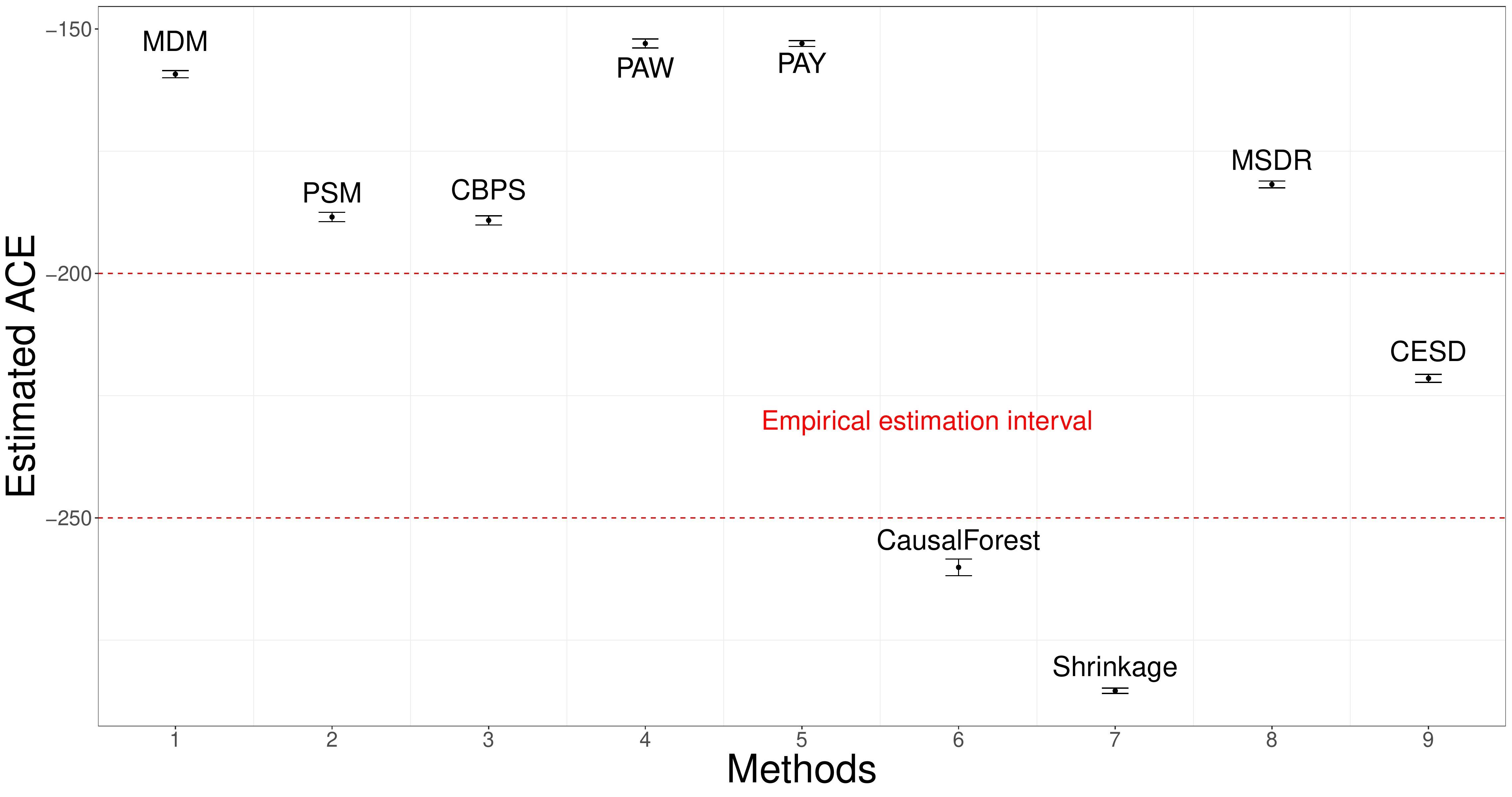}
  \caption{Estimated causal effects on the Cattaneo2 dataset w.r.t. 95\% confident interval. The two dotted lines represent empirical estimated interval (-$250g$, -$200g$).}
  \label{fig:Cattaneo2_plot}
\end{figure}

\subsubsection{Right Heart Catheterization}
\label{subsec::rhc}
Right Heart Catheterization (RHC) is a dataset from an observational study regarding a diagnostic procedure for the management of critically ill patients~\cite{connors1996effectiveness}. The RHC dataset can be downloaded from the $\mathbb{R}$ package \emph{Hmisc}\footnote{\url{https://CRAN.R-project.org/package=Hmisc}}. RHC contains the information of hospitalized adult patients from five medical centers in the USA. These hospitalized adult patients participated in the Study to Understand Prognoses and Preferences for Outcomes and Risks of Treatments (SUPPORT). The treatment $W$ indicates whether or not a patient received an RHC within 24 hours of admission. The outcome $Y$ is whether a patient died at any time up to 180 days since admission. The original RHC dataset has 5735 samples with 73 covariates. We pre-process the original data, as suggested by Loh et al.~\cite{loh2020confounder}, and the final data contains 2707 samples with 72 covariates.

The experimental results on this dataset are shown in Fig.~\ref{fig:rhc_plot}, where we can see that the result of CESD is consistent with those of PSM, CBPS, PAW, PAY, and CausalForest. The estimated causal effects by the methods indicate that applying RHC leads to higher mortality within 180 days than not applying RHC. %That is, CESD achieve a consistent conclusion with the state-of-the-art methods.
%However, the estimator of Shrinkage achieves the opposite conclusion. Hence, CESD is better than the estimator of Shrinkage.

%be applied in real-world applications and can draw a consistent conclusion with others.

\begin{figure}[t]
  \centering
  \includegraphics[width=8.5cm,height=4.5cm]{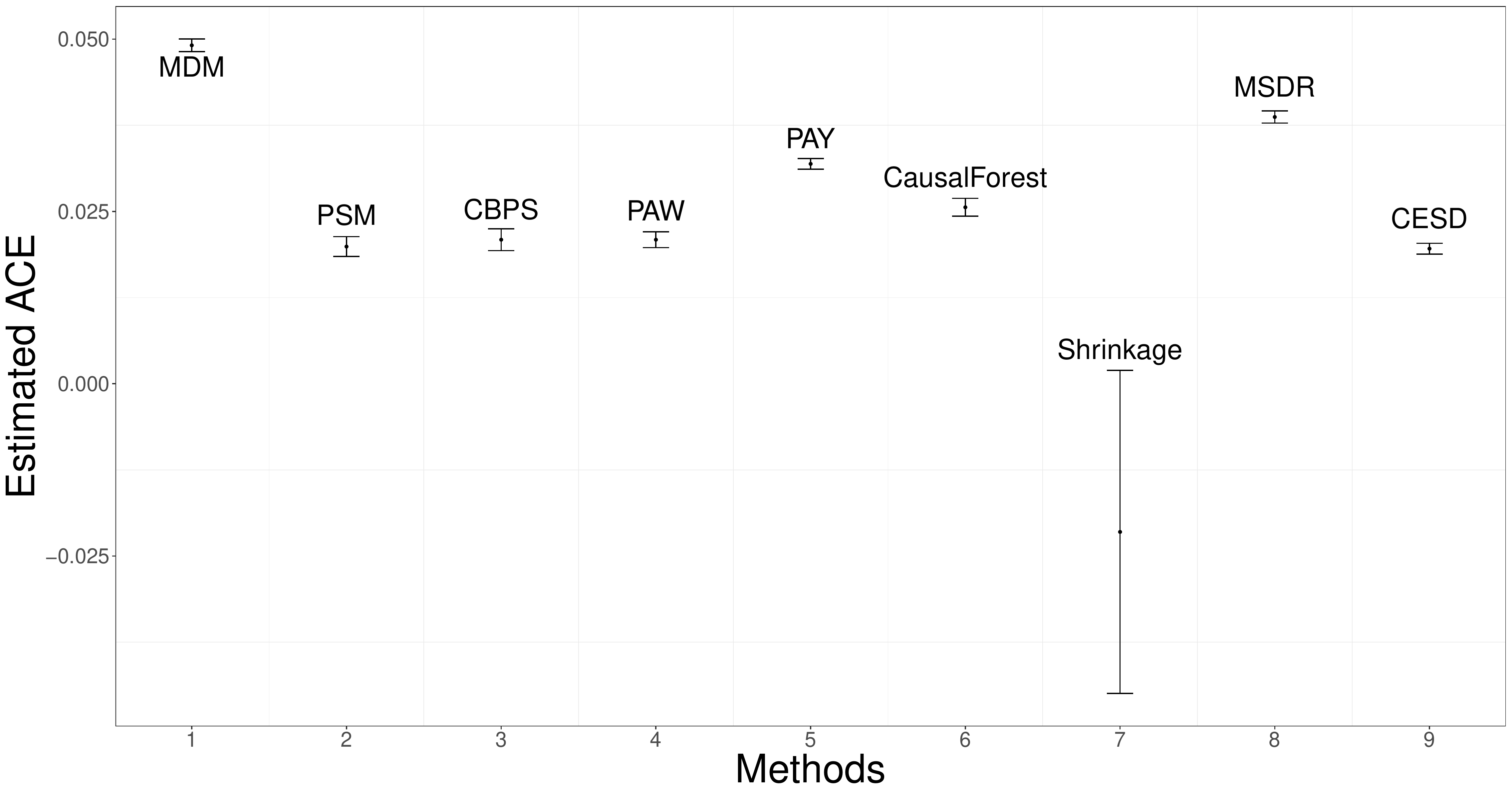}
  \caption{Estimated causal effects on the RHC dataset w.r.t. 95\% confident interval.}
  \label{fig:rhc_plot}
\end{figure}

In sum, based on all experimental results with the five datasets, we can conclude that CESD achieves the estimates which are close to true or empirically estimated causal effect values in all datasets and is consistently in the high performing method group across all the datasets, and CESD is the only method which is in the high performing group across all five datasets. The closest high performing method is CBPS from a widely used $\mathbb{R}$ package \emph{CBPS} in causal effect estimation. All these have demonstrated the robustness of the proposed method.
 %we have that (1). CESD achieves competitive performance with the estimator of CBPS. (2).
%CESD achieves the positive conclusions even without ground truth, which is consistent with state-of-the-art estimates. (3). CESD can be widely applied in real-world applications since there is no ground truth for real-world applications.

\subsection{The quality of matching}
To investigate further the better performance of CESD comparing with the other methods, we dig into the Cattaneo2 dataset to see the matching process, which is crucial for causal effect estimation methods which are based on propensity score and estimate propensity score using all covariates (PSM), factors of $W$ only (PAW), factors of $Y$ only (PAY), or balanced propensity score between treatment and control groups (CBPS). These methods all reduce the covariate set a one dimension propensity score for matching. We show the distributions of the estimated propensity scores in the treated and control groups in Fig.~\ref{fig:cattaneo_density} (left four sub figures). We see obvious mismatch of propensity score distributions in the two groups, and this leads to the loss of power in matching and results in a large variance in the estimated causal effect~\cite{stuart2010matching}. When the covariate set is reduced to two dimensions by kernel dimension reduction in CESD, we can see that the distributions of each reduced dimension in the treated and control groups (the curves in the right panel in Fig.~\ref{fig:cattaneo_density}) largely overlap. The overlap improves matching in causal effect estimation and this provides an explanation for the good performance of CESD.

\begin{figure}[t]
  \centering
  \includegraphics[width=8.4cm,height=5.8cm]{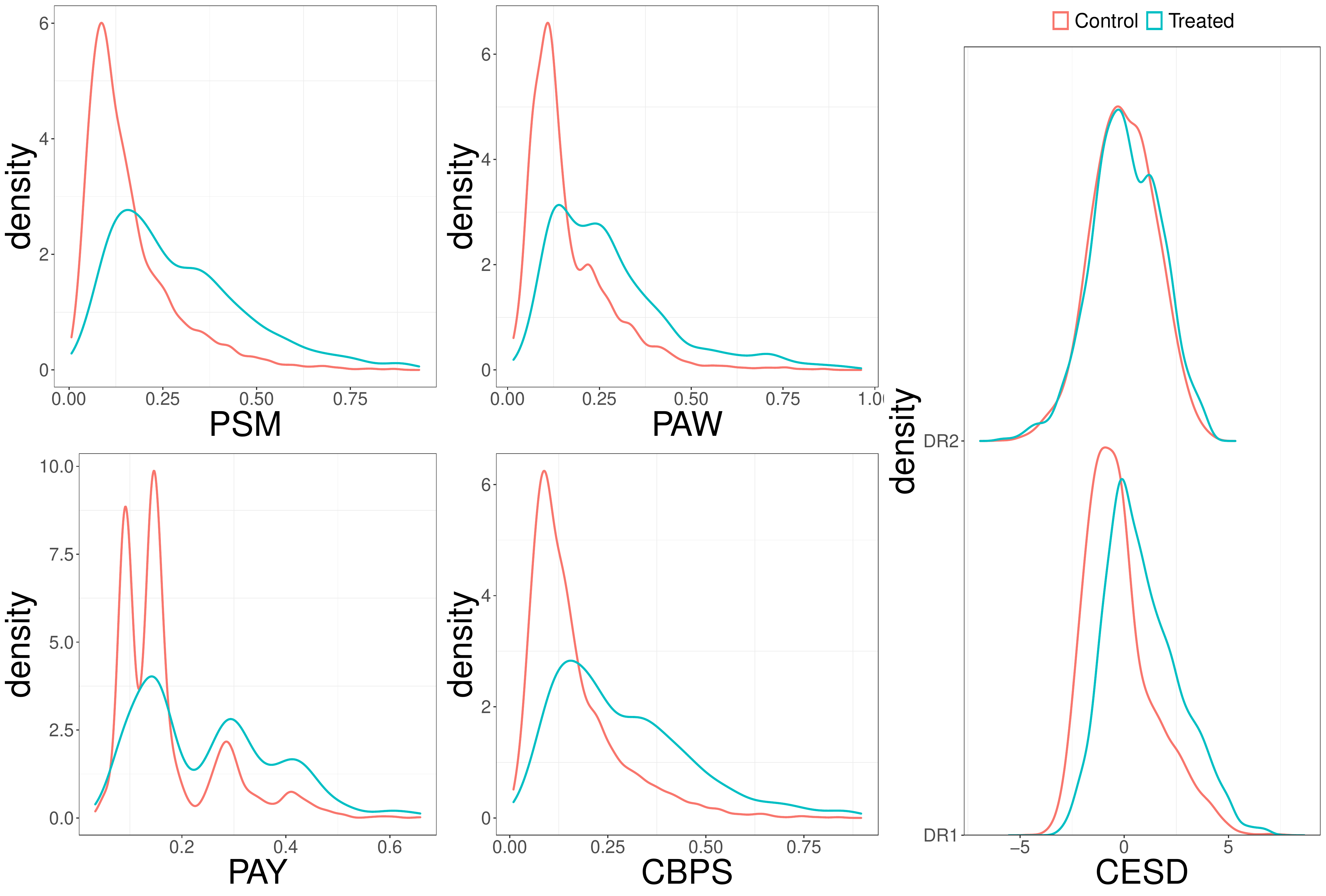}
  \caption{The distributions of propensity scores (left four) and reduced covariate dimensions, grouped by the treatment variable $W$ (red for the control group and blue for the treated group) on the Cattaneo2 dataset.}
  \label{fig:cattaneo_density}
\end{figure}

\section{Conclusion}
\label{sec:concs}
We have proposed a novel solution for average causal effect estimations through sufficient dimension reduction. In theory, we have proven the soundness of the solution where the reduced low-dimensional covariates are sufficient to remove confounding bias based on the graphical causal model, under the assumptions of pretreatment variables and unconfoundedness. We have developed a data-driven algorithm based on kernel dimension reduction, CESD, to estimate causal effects from observational data. Experimental results on two semi-synthetic real-world datasets and three real-world datasets demonstrate that CESD performs consistently very well in all five datasets in comparison with the state-of-the-art methods. This means that CESD is high performing and consistent, and is potentially useful in various areas for average causal effect estimation.

\bibliographystyle{AAAI}
\bibliography{aaai2021}  % sigproc.bib is the name of the Bibliography in this case

\end{document}